\theoremstyle{plain}
\newtheorem{thm}{Theorem}[section]
\newtheorem{lem}[thm]{Lemma}
\newtheorem{prop}[thm]{Proposition}
\theoremstyle{definition}
\newtheorem{examp}{Example}[section]
\newtheorem{remark}{Remark}[section]
\DeclareMathOperator{\rank}{\textrm{rank}}
\DeclareMathOperator{\pr}{\textrm{pr}}
\DeclareMathOperator{\SM}{\textrm{SM}}
\DeclareMathOperator{\U}{\textrm{U}}
\DeclareMathOperator{\SU}{\textrm{SU}}
\DeclareMathOperator{\SO}{\textrm{SO}}
\DeclareMathOperator{\Sp}{\textrm{Sp}}
\DeclareMathOperator{\vac}{\textbf{Vac}}
\DeclareMathOperator{\NPC}{\mathsf{NPC}}
\DeclareMathOperator{\Poly}{\mathsf{P}}
\DeclareMathOperator{\NP}{\mathsf{NP}}
\newcommand{\Z}{\mathbb{Z}}
\newcommand{\R}{\mathbb{R}}
\newcommand{\g}{\mathfrak{g}}
\newcommand{\h}{\mathfrak{h}}
\title{Ranks of gauge groups and an NP-complete problem on the Landscape}
\author{Abhijnan Rej\\\small{Tata Consultancy Services Innovation  Labs}\\\small{1 Software Units Layout, Madhapur}\\\small{Hyderabad 500081, Andhra Pradesh, India}\\
\\
\small{{\tt abhijnan.rej@gmail.com, abhijnan.rej@tcs.com}}}
\date{}
\begin{document}
\maketitle
\begin{abstract}
We prove that the problem of determination of factor gauge groups given the rank of the gauge group at any given vacuum in the Landscape is in the computational complexity class $\NPC$. This extends a result of Denef and Douglas on the computational complexity of determination of the value of the cosmological constant in the Landscape.
\end{abstract}
\section{Introduction}
The purpose of this short note is to prove a result we first conjectured in \cite{Rej} regarding the computational complexity of the problem of determining the factor gauge groups given the rank of a gauge group at an arbitrary point in the Landscape of string vacua whose size, as estimated in \cite{AcharyaandDouglas}, is of the order $10^{500}$. This line of thought originates in the seminal work of Denef--Douglas \cite{DenefandDouglas} where it is shown that the problem of determining the value of the cosmological constant in the Bousso--Polchinski model at an arbitrary point in the Landscape is NP-complete. Indeed we find another NP-complete problem on the Landscape related to the determination of the factor gauge groups (theorem \ref{thm:main}.) 
\par
The note is organized in the following way: in the rest of this section, we provide a very brief introduction to both the problems of determination of ranks of gauge groups in the Landscape as well as that of computational complexity. The main result is described and proved in section \ref{sec:main} where the key idea is to reduce the rank determination problem to a known $\NPC$ problem. By way of a conclusion in \ref{sec:conc} we end with a brief remark on integer partitions and their possible role in the enumerative combinatorial aspects of the string Landscape. In an appendix \ref{sec:subsum}, we briefly describe the \emph{subset-sum problem} in computation theory, a key ingredient in our proof. 
\subsection{Ranks of gauge groups and the Landscape}
A interesting parameter is the statistical study of the Landscape of string vacua $\vac$ is the \emph{average rank} of a point in $\vac$. This is defined \cite{KumarandWells} in terms of the number of complex moduli of the compactified space and the flux expressed in terms of the configuration of D-branes wrapping it. An interesting fact about the analysis of average rank carried out by Kumar and Wells \cite{KumarandWells} is the absence of any parameter that depends explicitly on the structure of the compactified space (a Calabi--Yau 3-fold in their case.) In their analysis, when the type IIB mirror of the compactified space is the orientifold $\mathbb{T}^6/\Z_2$, the average rank is $\frac{16}{5}$ and in presence of a small cosmological constant exactly 4 which is the Standard Model gauge group rank.
\par
In \cite{GmeinerEtAl}, it is estimated that the frequency of occurrence of MSSM in the Landscape with SUSY intersecting D-branes on an toriodal orbifold compactified space is around $10^{-9}$. Dienes \cite{Dienes} has performed an exhaustive study on the \emph{heterotic } Landscape  focusing on the subclass of $10^5$ vacua. Of particularly interest for our study is the following result of his: let $f$ denote the number of irreducible gauge group factors. In figure 2 of \cite{Dienes}, the number of distinct heterotic string models with $f$ gauge groups factors is plotted as a function of $f$, the main upshot being that out of $10^5$ distinct models, only 1301 distinct gauge groups are obtained. (In what follows, Dienes' $f$ will be denoted as $m$.) Furthermore, his study of the 10-dimensional heterotic models reveal very interesting statistics about the average number of $\SO$, $\SU$ and exceptional groups factors.
\par
The premise of this note is to be agnostic about the nature of the (compactified) backgrounds in the Landscape, their explicit geometry, the geometry of D-branes configurations wrapping them or even the probability distributions of values of cosmological constant taken over the Landscape, allowing for various "exotic" possibilities. This level of generality is inspired by John Wheeler's famous dictum that \emph{everything that is not explicitly forbidden is allowed}. The only constraint on our analysis is that we have considered simply-laced compact Lie groups as possible gauge groups factors as opposed to the more general non-simply-laced or noncompact types. 
\subsection{Complexity classes $\Poly$, $\NP$ and $\NPC$}
Computational complexity theory, very roughly speaking, is the study of how fast or slow it is to implement an algorithm to solve a problem given the size of the algorithm input (in other words, the \emph{complexity} of the problem.)  A different but equivalent way of stating this is to say that computational complexity theory is a quantitative theory of efficiency of algorithms. As a mature branch of theoretical computer science, it uses many combinatorial and graph-theoretic results and underlies almost all of modern cryptography. It has also recently found applications in various branches of theoretical physics, cf. \cite{aaronson} for an interesting overview. 
\par
Depending on the complexity, problems can be classified into three broad classes $\Poly$, $\NP$ and $\NPC$. (The canonical introduction to computational complexity remains chapter 34 of \cite{CormenEtAl}.) The easiest class of problems lie in the class $\Poly$; these are problems that can be solved in \emph{polynomial} time (polynomial in the length of the input). The next class of problems are ones in $\NP$; these are problems whose solutions can be verified to be true in polynomial time (the notation stands for \emph{nondeterministic polynomial} and {\bf not} "nonpolynomial".) Finally the hardest kind of problems are in the class $\NPC$ (\emph{nondeterministically polynomial complete}); these are problems for whom it is not even known if an efficient algorithm exist that solves it-- philosophically, problems in the class $\NPC$ are often thought of as "intractable". Generically speaking, it is not possible to obtain an optimally efficient solution to an $\NPC$-problem even though given a solution (\emph{the certificate}), it may be possible to verify the solution in polynomial time. The most famous of all $\NPC$ problems is the \emph{traveling salesman problem}. Other well-known problems in the class $\NPC$ include the \emph{graph isomorphism problem} (i.e. the problem of determining whether two given graphs are isomorphic or not) and the so-called \emph{subset-sum} problem which we will crucially use in our main result. (Appendix \ref{sec:subsum} gives a brief overview of the latter.) These three complexity classes have the following relationship with one another (cf. \cite{CormenEtAl} p. 1070):
\begin{eqnarray*}
& & \Poly \subseteq  \NP,\\
& & \NPC  \subset  \NP, \\
& & \NPC \cap \Poly  =  \emptyset.
\end{eqnarray*}
Convention holds that in the first equation, the equality is never attained. (Whether this is indeed true or not is the subject of the immensely important \emph{$\Poly \neq \NP$ conjecture}.)

\section{Main result}\label{sec:main}
\subsection{Ranks and maximal tori}
By a \emph{torus} $T$ of real dimension $r$ we mean a space \[\underbrace{\R/\Z \times \cdots \times \R/\Z}_{r-\textrm{times}}. \] The \emph{rank} of a compact connected Lie group $G$ is the real dimension $r$ of a torus $T$ embedded in it which is also a maximal (Lie)-subgroup of $G$. Such a torus is called a \emph{maximal torus}.
\par
Let $G$ and $G'$ be two connected compact Lie groups with associated maximal torus $T$ and $T'$ respectively. Then following result is well known and included for the benefit for the reader.
\begin{prop}\label{prop:torus}
The maximal torus associated to the products of the Lie groups $G \times G'$ is $T \oplus T'$.
\end{prop}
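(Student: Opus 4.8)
The plan is to prove the statement in two steps: first confirm that $T \oplus T'$ really is a torus subgroup of $G \times G'$, and then verify that it is maximal among tori by a projection argument. For the first step, since $T \cong (\R/\Z)^r$ and $T' \cong (\R/\Z)^{r'}$ are the maximal tori of $G$ and $G'$, their direct product $T \oplus T'$ (the symbol $\oplus$ recording the direct-sum structure of the underlying abelian groups) is isomorphic to $(\R/\Z)^{r+r'}$ and embeds in $G \times G'$ via the evident coordinatewise inclusion. Hence $T \oplus T'$ is a torus subgroup of dimension $r+r'$, and the real content of the proposition is that no strictly larger torus sits above it.

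For the second step, I would take an arbitrary torus $S \subseteq G \times G'$ with $T \oplus T' \subseteq S$ and aim to show $S = T \oplus T'$. Let $\pr \colon G \times G' \to G$ and $\pr' \colon G \times G' \to G'$ denote the two projection homomorphisms. The key fact I would invoke is that the continuous homomorphic image of a torus is again a torus: such an image is compact, connected, and abelian, and every compact connected abelian Lie group is a torus. Thus $\pr(S)$ is a torus in $G$, and because $S \supseteq T \oplus T'$ we get $\pr(S) \supseteq \pr(T \oplus T') = T$; maximality of $T$ in $G$ then forces $\pr(S) = T$, and by the identical argument $\pr'(S) = T'$. Finally, any $s = (s_1,s_2) \in S$ satisfies $s_1 = \pr(s) \in T$ and $s_2 = \pr'(s) \in T'$, so $S \subseteq T \oplus T'$; together with the reverse inclusion this yields equality.

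The step I would flag as needing the most care is the small lemma that a projection carries a torus to a torus — equivalently, that a compact connected abelian Lie subgroup is necessarily a torus — since this is exactly what lets maximality in each factor propagate to the product. An alternative and conceptually cleaner route bypasses groups and argues on Lie algebras: writing $\g = \mathrm{Lie}(G)$, $\g' = \mathrm{Lie}(G')$, one checks that $\h \oplus \h'$ is a maximal abelian subalgebra of $\g \oplus \g'$ whenever $\h, \h'$ are maximal abelian in $\g, \g'$. Indeed, if $(X,X')$ centralizes $\h \oplus \h'$ then $X$ centralizes $\h$ and $X'$ centralizes $\h'$, so $X \in \h$ and $X' \in \h'$ by maximality; integrating $\h \oplus \h'$ recovers the connected subgroup $T \oplus T'$. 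I would present the projection argument as the main proof, since it is elementary and self-contained, and cite the Lie-algebra computation as the structural reason the identity is forced. As a by-product the proof records $\rank(G \times G') = \rank(G) + \rank(G')$, which is the additivity of rank relevant to the combinatorial problem studied in the sequel.
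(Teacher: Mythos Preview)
Your proof is correct and follows essentially the same route as the paper: both arguments show maximality of $T \oplus T'$ by projecting a hypothetical larger torus $S$ onto a factor and invoking maximality of $T$ there. The only cosmetic difference is that the paper phrases it as a contradiction (pick $(x,y)\in S\setminus (T\oplus T')$, assume without loss of generality $x\notin T$, and observe that $\pr(S)$ is then a compact connected abelian subgroup of $G$ strictly containing $T$), whereas you give the direct version projecting onto both factors; you are also more explicit than the paper in flagging that one needs ``compact connected abelian Lie group $\Rightarrow$ torus'' for the projection image.
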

\begin{proof}
We will provide this statement through contradiction. We know that $T \oplus T'$ is a compact connected abelian subgroup of $G \times G'$. Therefore, it must be contained in some (possibly larger) maximal torus of $G \times G'$. Let that maximal torus be $S$. If it were strictly bigger, then there would be an element $(x,y) \in S$ that is not contained in $T \oplus T'$. Without loss of generality let us suppose that $x$ does not belong to $T$. Then the projection map 
\[ \pr: G\times G' \longrightarrow G\]
will map $S$ to a compact connected abelian subgroup of $G$ strictly containing $T$ (the image containing $x$ that does not belong to $T$). This contradicts the maximality of the torus $T$ in $G$.
\end{proof}
Using proposition \ref{prop:torus}, we prove the following key lemma.
\begin{lem}\label{lem:prodrank}
Let $G$ and $G'$ be compact connected Lie groups. Then
\[ \rank(G \times G') = \rank(G) + \rank(G').\]
\end{lem}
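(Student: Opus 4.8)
The plan is to read the lemma off directly from Proposition \ref{prop:torus} together with the definition of rank. First I would recall that $\rank(G)$ is, by definition, the real dimension of a maximal torus $T \subset G$, and likewise $\rank(G') = \dim T'$ for a maximal torus $T' \subset G'$. By Proposition \ref{prop:torus}, a maximal torus of the product $G \times G'$ is precisely $T \oplus T'$. Consequently the computation of $\rank(G \times G')$ reduces entirely to the computation of $\dim(T \oplus T')$, and the lemma will follow once we know that the real dimension of a direct sum of tori is the sum of their real dimensions.

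Second, I would make this dimension count explicit. Writing $r = \rank(G)$ and $r' = \rank(G')$, the definition of a torus gives $T \cong \underbrace{\R/\Z \times \cdots \times \R/\Z}_{r-\textrm{times}}$ and similarly for $T'$ with $r'$ factors, so that
\[ T \oplus T' \cong \underbrace{\R/\Z \times \cdots \times \R/\Z}_{(r+r')-\textrm{times}}, \]
which is by definition a torus of real dimension $r + r'$. Combining this with the identification of the maximal torus of the product yields
\[ \rank(G \times G') = \dim(T \oplus T') = r + r' = \rank(G) + \rank(G'), \]
which is exactly the assertion of the lemma.

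The step carrying all the content is the appeal to Proposition \ref{prop:torus}: once the maximal torus of the product is known to split as $T \oplus T'$, the remainder is a routine additivity-of-dimension check, and I do not expect a genuine obstacle. The only point meriting a word of care is that the direct sum $T \oplus T'$ inherits the product-of-circles structure so that the two blocks of circle factors do not interact and the dimensions genuinely add; but this is immediate from the definition of a torus adopted at the start of this section, so the argument closes cleanly.
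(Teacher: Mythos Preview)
Your argument is correct and is essentially identical to the paper's own proof: both invoke Proposition~\ref{prop:torus} to identify the maximal torus of $G\times G'$ as $T\oplus T'$ and then read off $\rank(G\times G')=\dim(T\oplus T')=\dim T+\dim T'=\rank(G)+\rank(G')$. The only difference is that you spell out the additivity of dimension by writing the circle factors explicitly, which the paper leaves implicit.
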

\begin{proof}
From proposition \ref{prop:torus} we know that the maximal torus associated to the product of the Lie groups $G \times G'$ is $T \oplus T'$. Therefore, by definition,
\begin{eqnarray*}
\rank(G \times G') & = & \dim(T \oplus T'), \\
                   & = & \dim(T) + \dim(T'),\\
                   & = & \rank(G) + \rank(G').
\end{eqnarray*} 
\end{proof}
For a set $\{G_i\}$, $1 \leq i \leq n$, of compact connected Lie group with an associated sequence of maximal tori $\{T_i\}$, $1 \leq i \leq n$, the above proposition \ref{prop:torus} and lemma \ref{lem:prodrank} generalizes \emph{in toto}:
\[ \rank(G_1 \times \cdots \times G_n) = \rank(G_1) + \cdots + \rank(G_n).\]
\begin{examp}
The Standard Model of particle physics is given by a product of compact connected Lie groups $\U(1)$, $\SU(2)$ and $\SU(3)$ (the \emph{factor gauge groups}) as
\[ G_{\SM} = \SU(3) \times \SU(2) \times \U(1),\]
so 
\begin{eqnarray*}
\rank(G_{\SM}) & = & \rank(\SU(3)) + \rank(\SU(2)) + \rank(\U(1)),\\
             & = & (3-1) + (2-1) + 1 = 4.
\end{eqnarray*}
(The special unitary groups $\SU(n)$ have rank $n-1$ and the unitary group $U(n)$ has rank $n$.)
\end{examp}
Consider the Landscape of string vacua $\vac$ and suppose we want to determine the factor gauge groups on a single point $\mathcal{U} \in \vac$ given that the rank of the gauge group products $\mathcal{G}$ is known and fixed:
\[ \rank(\mathcal{G}): = \alpha.\]
An alternative formulation: let $\vac$ be the set of inequivalent string vacua. Define the function
\begin{eqnarray*}
\rank : \vac & \longrightarrow & \Z^{+},\\
	\mathcal{U} & \mapsto & \rank({\mathcal{G}}).
\end{eqnarray*}
which maps a vacuum $\mathcal{U}$ to the rank of the gauge group, $\rank({\mathcal{G}})$, at $\mathcal{U}$. In general, the rank function is not injective. Our central claim will be the following: the problem of determining the factor gauge groups given a ("target") value of the rank function lies in the complexity class $\NPC$.
\par
To see this, let us assume $\{G_j\}$, $1 \leq j \leq m$, to be a set of Lie groups that hypothetically constitute the $m$ factors of $\mathcal{G}$ (The number $m$ is often referred to in the Landscape literature as the \emph{shatter}.) These factor groups can have any structure as long as they are \emph{simply-laced, connected and compact and each should arise as a single factor in $\mathcal{G}$}-- the set of (isomorphism classes of) such groups is finite, following Cartan's classification. Let \[\rank({G_j}) = \alpha_j.\] We do not require that $\alpha_k = \alpha_l \implies k = l$.
\par
We know that
\[ \rank(G_1 \times \cdots \times G_m) = \sum_{j = 1}^m \alpha_j\]
by lemma \ref{lem:prodrank}. 
\begin{thm}\label{thm:main}
Let $S$ be a set of positive integers and $\alpha$ be a fixed positive integer. The problem of determining whether $S$ has a subset $S' = \{\alpha_j\}_{j=1}^m$ with \[ \sum_{j = 1}^m \alpha_j = \alpha \] is in the complexity class $\NPC$.
\end{thm}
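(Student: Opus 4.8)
The plan is to verify the two defining conditions of membership in $\NPC$: that the problem lies in $\NP$, and that it is $\NP$-hard, i.e.\ that some problem already known to be $\NPC$ reduces to it in polynomial time. Since the decision problem in the statement is precisely the classical \emph{subset-sum} problem recalled in appendix \ref{sec:subsum}, both conditions can be established by the standard arguments, which I now outline.

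First I would dispose of membership in $\NP$. The certificate is simply an explicit subset $S' \subseteq S$. Given such a candidate, a verifier adds its elements and compares the total against $\alpha$; since each integer in $S$ has description length bounded by the input size and there are at most $|S|$ summands, this addition and comparison run in time polynomial in the length of the input. Hence a correct subset can be verified efficiently, placing the problem in $\NP$.

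The substantive half is $\NP$-hardness, for which I would exhibit a polynomial-time reduction from a problem already known to be $\NPC$. Following the textbook treatment in \cite{CormenEtAl}, the natural source problem is $3$-CNF-SAT. Given a Boolean formula $\phi$ in $3$-conjunctive normal form with variables $x_1,\dots,x_n$ and clauses $C_1,\dots,C_k$, I would build a set of positive integers together with a target value $\alpha$, each integer written in base $10$ with one digit position allocated to every variable and one to every clause. For each variable I create two integers encoding the choices $x_i = \text{true}$ and $x_i = \text{false}$, and for each clause I create two \emph{slack} integers; the target $\alpha$ is then chosen so that its variable digits force exactly one literal to be selected per variable and its clause digits force each clause to be satisfied. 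One then checks that $\phi$ is satisfiable if and only if the constructed set has a subset summing to $\alpha$, and that the entire construction is produced in time polynomial in $n+k$.

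The main obstacle I anticipate is not the logical equivalence but the bookkeeping that makes the encoding valid: the digit-wise scheme works only if no \emph{carries} propagate between digit positions during addition, which forces the base (equivalently, the spacing between positions) to be large enough that the number of integers contributing to any single position cannot overflow it. Verifying both directions of the ``if and only if'' and confirming that every integer produced stays polynomially bounded is the delicate part; once that is in place, combining $\NP$-membership with the reduction shows the problem is $\NPC$. I note finally that one may instead simply invoke the classical fact that subset-sum is $\NPC$, since the problem in the statement is a verbatim instance of it.
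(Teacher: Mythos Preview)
Your proposal is correct and follows essentially the same approach as the paper: both identify the problem as a verbatim instance of the classical subset-sum problem and appeal to its known $\NPC$ status. The paper's proof is in fact a single-line citation to theorem 34.15 of \cite{CormenEtAl}, whereas you go further and sketch the content of that reference (membership in $\NP$ via a subset certificate, $\NP$-hardness via the digit-encoding reduction from $3$-CNF-SAT); your final sentence already anticipates the paper's shortcut.
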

\begin{proof}
The problem as formulated is exactly the famous \emph{subset-sum problem}. For a proof that this problem is in the complexity class $\NPC$ see theorem 34.15 of \cite{CormenEtAl}, p. 1097. 
\end{proof}
In computational complexity theory $\alpha$ is called the \emph{target value}; in appendix \ref{sec:subsum}, we discuss the subset-sum problem at an elementary level for readers unfamiliar with it.
\par
It is very important to note that theorem \ref{thm:main} holds true (and indeed, any application of computational complexity is permissible to string theory!) only if $|\vac| < \infty$. The fact that this is true is the extremely famous argument of Acharya--Douglas \cite{AcharyaandDouglas}.
\begin{remark}
It is important to keep in mind what these two sets $S$ and $S'$ correspond to, physically. While $S'$ corresponds to the set of all simply-laced, connected and compact gauge groups that hypothetically contribute single factors to $\mathcal{G}$, $S$ corresponds to a much larger set of all not necessarily simply-laced Lie groups and products of them (the number of factors in products of such groups being bounded by some very large integer) such as $\SO(2n+2)$, $\Sp(2,n)$, $F_4$, $E_8$ and $G_2$ (the last one is is a 7-dimensional Riemannian manifold with has holonomy group as the octonions.) The set $S$ then is the set of ranks of all such groups (also accounting for multiplicities that arise out of taking products). As Dienes notes \cite{Dienes}, even in the context of the Landscape of 10-dimensional heterotic string models (which is a minuscule part of the whole Landscape), allowing for generality on the gauge groups that can arise forces us to also take into account these non-simply-laced models. Taking into account $\mathcal{N} = 1,2 \textrm{ or } 3$ SUSY can significantly increase the size of $S$.
\end{remark}
\subsection{The Lie algebras picture}
By the term "rank", a practicing physicist often simply mean the maximum number of linearly independent commuting elements of a Lie algebra $\g$, in other words, the dimension of the maximal abelian subalgebra of $\g$. (A Lie group $G$ and its associated Lie algebra $\g$ are functorially related to each other through the surjective \emph{exponential map} $\g \longrightarrow G$.) 
\par
In the next few lines we remind the reader why this definition in terms of a maximal abelian sub-Lie algebra is the same as the definition in terms of maximal torus associated to $G$. Recall a basic result in Lie theory (\cite{Bump}, p. 120) that uses Cartan's theorem in an essential way:
\begin{prop}\label{prop:maxabtor}
Let $G$ be a compact connected Lie group with Lie algebra $\g$. A maximal abelian subalgebra $\mathfrak{h}$ of $\g$ is the Lie algebra of a conjugate of a maximal torus $T$ of $G$. The dimension of $\mathfrak{h}$ is the rank of $G$.
\end{prop}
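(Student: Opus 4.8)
The plan is to set up a two-way correspondence between maximal abelian subalgebras of $\g$ and conjugates of a fixed maximal torus $T$, using the subgroup--subalgebra dictionary together with Cartan's conjugacy theorem for maximal tori. Throughout I would take as given the surjectivity of the exponential map $\exp\colon \g \to G$, the fact that a compact connected abelian Lie group is a torus, and the standard inclusion-preserving bijection $H \mapsto \mathrm{Lie}(H)$ between connected immersed Lie subgroups of $G$ and subalgebras of $\g$.

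First I would show that the Lie algebra $\mathfrak{t}$ of a maximal torus $T$ is itself a maximal abelian subalgebra of $\g$. It is abelian because $T$ is. Suppose some abelian subalgebra $\mathfrak{h}$ satisfied $\mathfrak{t} \subsetneq \mathfrak{h}$, and let $H$ be the connected subgroup with $\mathrm{Lie}(H) = \mathfrak{h}$; since $\mathfrak{h}$ is abelian, $H$ is abelian, and by inclusion of Lie algebras $T \subseteq H$. The closure $\bar{H}$ is then a compact connected abelian subgroup, hence a torus containing $T$, so maximality of $T$ forces $\bar{H} = T$ and therefore $\mathfrak{h} = \mathfrak{t}$, a contradiction.

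For the converse direction I would take an arbitrary maximal abelian subalgebra $\mathfrak{h}$ and run the same closure construction: let $H$ be the connected subgroup with Lie algebra $\mathfrak{h}$, and let $\bar{H}$ be its closure, a torus with $\mathfrak{h} \subseteq \mathrm{Lie}(\bar{H})$. Embedding $\bar{H}$ in some maximal torus $T'$ with Lie algebra $\mathfrak{t}'$ yields $\mathfrak{h} \subseteq \mathfrak{t}'$; since $\mathfrak{t}'$ is abelian and $\mathfrak{h}$ is maximal abelian, we get $\mathfrak{h} = \mathfrak{t}'$, so $\mathfrak{h}$ is exactly the Lie algebra of the maximal torus $T'$. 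By Cartan's theorem all maximal tori are conjugate, hence $T' = gTg^{-1}$ for some $g \in G$ and $\mathfrak{h} = \mathrm{Ad}(g)\mathfrak{t}$, which is the asserted statement that $\mathfrak{h}$ is the Lie algebra of a conjugate of $T$. The dimension claim is then immediate: $\dim \mathfrak{h} = \dim \mathfrak{t} = \dim T = \rank(G)$.

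The step I expect to be the main obstacle is the closure argument itself: verifying that passing from the immersed, possibly non-closed connected subgroup $H$ to its closure $\bar{H}$ preserves the abelian structure and genuinely produces a \emph{torus} rather than some larger subgroup. This rests on the compactness of $G$, so that closures of subgroups are again compact Lie subgroups, and on the classification of compact connected abelian groups as tori. The conjugacy of maximal tori, which I would invoke from Cartan's theorem rather than reprove, is the other essential deep input, and it is precisely what makes the phrase ``a conjugate of a maximal torus'' unavoidable.
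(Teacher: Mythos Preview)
Your argument is correct and is essentially the standard one. Note, however, that the paper does not actually prove this proposition: it is stated as a recalled fact with a citation to \cite{Bump}, p.~120, together with the remark that the result ``uses Cartan's theorem in an essential way.'' Your sketch makes that dependence explicit, invoking Cartan's conjugacy of maximal tori exactly where the paper's parenthetical comment says it enters, so your write-up is entirely consistent with the paper's treatment---you have simply supplied the details the paper chose to outsource.
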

Let $G$ and $G'$ be compact connected with maximal tori $T$ and $T'$ and with associated maximal abelian subalgebras $\h$ and $\h'$ respectively. By lemma \ref{lem:prodrank}, the rank of $G \times G'$ equals the sum of ranks of $G$ and $G'$. Applying proposition \ref{prop:maxabtor} to lemma \ref{lem:prodrank}, we see that this is the sum of the dimensions of $\h$ and $\h'$ and therefore our main theorem \ref{thm:main} goes through in exactly the same way with this equivalent (and perhaps more familiar) notion of rank.

\section{Integer partitions and gauge groups ranks?}\label{sec:conc}

There is another interesting combinatorial problem associated to the determination of factor gauge groups given the gauge group rank at a point in a Landscape, namely considering its integer \emph{partitions}.  Recall that the (arithmetic) partition function $p(n)$ counts the number of ways $n$ can be written as a sum of positive integers. Explicit number-theoretic bounds and other properties of $p(n)$ form a rich part of arithmetic theory, cf. chapter XIX of \cite{HardyandWright}. A possible way of thinking of theorem \ref{thm:main} is in terms of partitions of the gauge group rank $p(\alpha)$ in terms of increasing factor gauge group ranks. An explicit asymptotic form of partition function (essentially due to Euler) goes as
\[ p(\alpha) \sim \frac{\exp(\pi \sqrt{2 \alpha/3})}{4 \alpha \sqrt{3}}.\]
It can be seen (p.361 of \cite{CormenEtAl}) that this is less that $2^{\alpha-1}$ but still grows faster that an polynomial in $\alpha$ (so the problem of computing all increasing integer partitions of $\alpha$ is provably not in the complexity class $\Poly$.) It would be an interesting task to recast the gauge groups determination problem as presented in this note in terms of the rich arithmetic of the partition function $p(\alpha)$.
\vspace{5mm}
\subsection*{Acknowledgements} We thank Snigdhayan Mahanta for many pertinent discussions on Lie theory and Matilde Marcolli for her interest in this work, with the disclaimer that all remaining errors in this note is ours alone. We also thank our employer Tata Consultancy Services for allowing us to devote part of our time to fundamental research.

\appendix
\section{The subset-sum problem}\label{sec:subsum}
Our exposition follows \cite{CormenEtAl}, 34.5.5, p. 1097.  The basic setup of the subset-sum problem is a finite set of integers $S$ and a fixed integer--the target value-- $t$ and the goal is to determine whether there is a subset $S'$ of $S$ such that \[ \sum_{s \in S'} s = t.\] (The reader may want to construct a few examples of when this is indeed the case.)
\par
The subset-sum problem, as a language, is defined as:
\[
{\tt SUBSET-SUM} = \left\{ <S,t> \quad :  \textrm{ there exists a subset }S' \subseteq S\textrm{ such that } t = \sum_{s \in S'} s.\right\}
\]
To prove that the problem is indeed in the class $\NPC$, we first have to show that it is in $\NP$. This is done using a verification algorithm with the desired subset $S'$ as a certificate. Proving that it is also $\NP$-complete is more involved and done using a formula satisfiability problem called 3-CNF-SAT. (This is a standard application of a \emph{reduction algorithm}, cf. figure 34.1 of \cite{CormenEtAl}.)
\par
The subset-sum problem can also be viewed as an \emph{optimization problem} as opposed to a \emph{decision problem} (as we have done.) An exponential-time exact algorithm and a fully polynomial-time approximation scheme to this problem is given in section 35.5 of \cite{CormenEtAl}.

\end{document}